\begin{document}
\title{Engineering Resilience: An Energy-Based Approach to Sustainable Behavioural Interventions
}
\titlerunning{Engineering Resilience}
%
\author{Arpitha Srivathsa Malavalli\inst{1}\orcidID{0000-0002-5491-5535} \and
Karthik Sama\inst{1}\orcidID{0000-0002-5538-916X} \and
Janvi Chhabra\inst{1}\orcidID{0000-0002-4642-7726} \and
Pooja Bassin\inst{1}\orcidID{0000-0002-0611-8734}\and
Srinath Srinivasa\inst{1}\orcidID{0000-0001-9588-6550}}
\authorrunning{Malavalli et al.}
%
\institute{International Institute of Information Technology Bangalore, Bengaluru, Karnataka - 560100}
\maketitle              
\begin{abstract}
Addressing complex societal challenges—such as improving public health, fostering honesty in workplaces, or encouraging eco-friendly behaviour—requires effective nudges to influence human behaviour at scale. Intervention science seeks to design such nudges within complex societal systems. While interventions primarily aim to shift the system toward a desired state, less attention is given to the sustainability of that state, which we define in terms of resilience: the system's ability to retain the desired state even under perturbations. In this work, we offer a more holistic perspective to intervention design by incorporating a nature-inspired postulate—lower-energy states tend to exhibit greater resilience— as a regularization mechanism within intervention optimization to ensure that the resulting state is also sustainable. Using a simple agent-based simulation where commuters are nudged to choose eco-friendly options (e.g., cycles) over individually attractive but less eco-friendly ones (e.g., cars), we demonstrate how embedding lower energy postulate into intervention design induces resilience. The system energy is defined in terms of motivators that drive its agent’s behaviour. By inherently ensuring that agents are not pushed into actions that contradict their motivators, the energy-based approach helps design effective interventions that contribute to resilient behavioural states
\keywords{Sustainability \and Resilience \and Energy \and Intervention Science \and Stag Hunt}
\end{abstract}
\section{Introduction}
Nudges play a pivotal role in driving behavioural changes in numerous contexts-- be it guiding communities towards adopting energy-efficient solutions, encouraging consumers to choose specific products or motivating patients to prioritize their physical and mental well-being~\cite{BehChange_EnergyUse,BehChange_tryProcedures}. When populations are conceptualized as complex systems of rational agents making decisions, nudges serve as interventions that alter these decisions and steer the system towards desirable collective states—such as widespread adoption of eco-friendly commuting or greater patient compliance in healthcare.

Interventions are often costly, time-consuming, and difficult to reverse. Therefore, Intervention designers typically rely on simulations to evaluate multiple intervention designs before implementation~\cite{InterventionMAS,InterventionMAS2}. These interventions are designed with the primary aim of creating huge impact i.e large change towards the desired state. However, an equally important but often overlooked criterion is sustainability—the ability of the system to maintain the achieved state.

Sustainability is commonly understood in terms of longevity -- the maintenance of a desirable state over an extended period. However, when evaluating the impact of interventions through simulations, the time taken by the simulation to reach a final state offers little insight into the sustainability of that state. In the context of intervention science, the concept of \textit{resilience} offers a complementary perspective on sustainability. Resilience is defined as a system's capacity to absorb disturbances or perturbations while continuing to function and adapt to changes~\cite{walker2004resilience,ResinSus1}. If the final state is not resilient, small perturbations can undo the behavioural change achieved, and all the resources spent on implementing the intervention get wasted. Thus, it is important to aim for both, resilience and impact, when designing and evaluating interventions.

While resilience has been studied extensively, much of this work remains domain-specific—whether in telecommunications~\cite{ResilienceTelecommunication}, psychology~\cite{ResiliencePostTrauma}, or organizational behaviour~\cite{LeanSigSigmaResilience}. They also typically addresses resilience as the primary design objective rather than a property that must be achieved alongside the actual intervention objective .Our work draws inspiration more fundamental sustainability principle, observed in nature — systems tend to evolve toward "low-energy" configurations, which are highly sustainable~\cite{TheoryofBeing}. The attainment of hydrostatic equilibrium, electron emissions, and the formation of molecular structures are all examples of systems settling into low-energy states, supporting the postulate that such states are inherently more resilient and sustainable. Motivated by this principle, this work explores whether energy minimization can be incorporated into intervention design to achieve sustainable behavioural change..

The Fogg Behaviour Model~\cite{FoggBehaviourModel} lists key motivators that  shape human behaviour: pleasure/pain, hope/fear, and social acceptance/rejection (conformity). 
In behavioural systems, the dissonance experienced by agents when their actions conflict with their key motivators, can be considered an analogue to energy. We focus on conformity or an agent's need to align with majority as its dominant motivator. When their actions diverge from their neighbours' behaviours~\cite{solomon1956conformity,CognitiveDissconance1}, their energy increases with increased dissonance. The aggregate dissonance across agents represents the system's total energy. While nudging agents towards desired behaviours, interventions may force systems to high-energy states that are not resilient to perturbations.

To prevent this, we propose that interventions should not only aim to maximise impact but also embed constraints that prevent energy increase.  By designing interventions such that systems remain in low-energy configurations, while agents are steered towards desired behaviours, both impact and resilience can be achieved.

In this work, we investigate these ideas through two hypotheses:

\textbf{H1:} The lower the energy of a complex social system, the higher its resilience to targeted attacks and random perturbations.

\textbf{H2:} Incorporating energy-based constraints into intervention design yields more sustainable outcomes.

To test these hypotheses, we simulate a population of commuters who must choose between cycling and driving cars. While cars are the convenient , self serving choice, interventions have to be designed to nudge the population to choose the more eco-friendly option of cycles. 
Using agent-based models where conformity is the primary motivator, we show that configurations in which agents are at low energy - that is, experience little or no dissonance due to minimal differences in beliefs with their neighbours -  display high resilience to a range of perturbation scenarios. We also demonstrate how the design of the popular Connect People Intervention~\cite{ConnectPeopleIntervention} can be tweaked to include energy-aware constraints to achieve comparable impact but substantially better resilience.

Thus, this work proposes using system energy as a key metric to evaluate intervention designs for sustainability and resilience. It introduces the perspective of energy-aware intervention design, which maintains low energy while steering agents towards desired behaviours.

\section{Motivation}\label{sec:context}

\subsection{Resilience-based Interpretation of Sustainability}
The primary objective of a nudge/intervention is to push a population towards a certain desired behaviour. Additionally, the policy makers would also expect that this change persists, that the individuals consistently choose to make the desired choice i.e. the impact of interventions needs to be sustainable.
 
Many authors~\cite{colocousis2017long} agree that time forms one of the key dimensions of sustainability. Various measures of sustainable development such as the Human Development Index (HDI)~\cite{neumayer2001human}, and Happy Planet Index~\cite{bondarchik2016improving} have life expectancy or longevity as their primary component. However, such an interpretation poses challenges in designing interventions. The expected lifespan of a geographical region may extend to several centuries, making it infeasible to verify any assertion about sustainability over time. Also, time flows in a single direction, making it infeasible to reverse a prior intervention.
Sustainability can also be defined alternatively as the ability of an entity to prevail against perturbations i.e any unexpected or adversarial change that  disturbs its current state~\cite{adger2000social,rose2014economic}.
Thus, in the context of intervention design we interpret sustainability as – \textbf{resilience against perturbations}. For instance, the sudden failure of a transmission line in power grid   or an ecosystem disrupted by an invasive species are both classic perturbations. The resilience of the power grid or the lake ecosystem depends on its ability to maintain functionality, despite the perturbation. 

Similarly, in agent-based social networks, a perturbation might involve the loss of a key influencers, or a shift in the incentives of its agents. From an evaluation standpoint, a complex system's state is said to be resilient if it can withstand targeted attacks and random perturbations against it~\cite{ResilienceDef,TAdef}. Hence, the objective of a sustainable intervention is to ensure that agents continue to make desired choices even in the case of perturbations i.e  ensure resilience of the system while achieving desirable state. 

\subsection{Building Resilience}
The  methods to achieve resilience in complex systems has been studied extensively. However, the the methods vary drastically from technical to economic to social domains.

Public policies are typically large-scale interventions designed to steer complex systems, such as nations or states, toward desired outcomes. Resilience frameworks for public policy recommend processes to embed resilience within these systems. For instance, the RPD Framework~\cite{ResilienceInPolicy2Framework} advocates combining policies that have historically performed well across dimensions like robustness, adaptability, and transformability to enhance resilience in bio-based production systems (BBPS). Similarly, a separate study identifies seven principles for designing resilient policies in social-ecological systems~\cite{ResilienceInPolicy1}, emphasizing policy diversity, goal redundancy reduction, and poly-centric governance. Implementing these recommendations requires evaluators to possess interdisciplinary expertise and contextual understanding of the target demographic and governance structures. Moreover, such frameworks are not directly transferable to other domains, such as telecommunications.

In telecommunications, resilience is achieved through strategies like backup topologies and dynamic routing~\cite{ResilienceTelecommunication}. These bear no conceptual commonality with  processes recommended to foster psychological resilience in post-trauma patients~\cite{ResiliencePostTrauma}, or to organizational resilience practices in Agile, Lean, and Six Sigma frameworks~\cite{AgileResilience,LeanSigSigmaResilience}. While certain resilience-building principles may appear across domains, they remain largely qualitative and context-specific, limiting direct cross-domain applicability.

Extensive research has focused on building resilience into networks, which can be applied to systems represented as networks~\cite{NetworkResilience2,NetworkResilience}. These studies typically define resilience through structural measures such as degree centrality and network connectivity. While they offer quantitative frameworks to enhance resilience, applying such methods to real-world social systems is challenging, as altering the topology of social networks is not trivial. Moreover, these approaches often assume resilience itself as the primary objective of intervention. In practice, however, interventions aim primarily to nudge populations toward desirable behaviours. Resilience is a desirable outcome that must be achieved alongside the primary aim.

There exists a fundamental postulate of sustainability, inspired by nature, proposed in the book Theory of Being~\cite{TheoryofBeing} as:\\
\textit{``Just about every system that is reasonably separable from its environment, seems to  eventually settle down into \textbf{low energy} stable regions.''}\\
Lower energy states are more sustainable and this is commonly observed to be the driving force behind various natural phenomena - the formation of molecular structures, chemical reactions, electron emissions, the movement towards hydrostatic equilibrium. This postulate is the base for optimization techniques such as simulated and quantum annealing~\cite{simulated_annealing}. In this work we explore leveraging this energy-based postulate to build resilience into interventions that are designed to bring behavioural change.

\subsection{Energy of a Complex System}

In complex systems, "energy" can be defined contextually, often reflecting properties of the system’s entities. In physical systems, energy manifests in mechanical, chemical, thermal, electrical, or nuclear forms. In social systems, several studies have tried to model energy-like potential functions that guide system dynamics~\cite{SocialEnergy,SocialEnergy2}. In behavioural contexts, agents’ alignment with their internal motivators serves as an analogue to energy. Similar to physical potential energy, agents exhibit higher energy when forced into states contradictory to their motivators i.e far from their preferred configurations. The Fogg Behaviour Model~\cite{FoggBehaviourModel} identifies three primary motivators for behaviour:
\begin{enumerate}
\item Pleasure/Pain,
\item Hope/Fear,
\item Social Acceptance/Rejection (Conformity).
\end{enumerate}

The Asch conformity experiments~\cite{solomon1956conformity} demonstrate that individuals often defer to the majority, especially in ambiguous situations, assuming others possess better knowledge. Conformity is thus a critical motivator in behavioural interventions. In this work, we model agents who are motivated by conformity. Their choices are influenced by their neighbours, motivated by a desire to align with the majority. An agent’s dissonance with its neighbours generates \textit{stress}, which we model as the agent’s energy, consistent with studies on cognitive dissonance~\cite{Conformity_Stress,CognitiveDissconance1,CognitiveDissconance2}. The system’s overall energy is thus captured by aggregate stress.

It is important to note that, other behaviour motivators may dominate in different contexts. For example, in interventions aimed at promoting vaccination, fear of side effects can serve as a source of stress. A population characterized by high levels of fear represents a high-energy state. 

\subsection{Energy-Based Intervention Design for Resilience}

Having defined energy within behavioural contexts, this work investigates the following hypotheses:

\textbf{H1:} The lower the energy of a complex social system, the higher its resilience. For conformity-driven agents, this translates to exploring the relationship between system resilience and the aggregate stress experienced by agents due to dissonance between their beliefs and those of their neighbours.

\textbf{H2:} Resilience can be embedded into interventions by incorporating energy-based constraints into their design. This allows resilience to be achieved alongside the primary objective of nudging agents toward desired behaviours.

The following sections, we presents a simulation study to investigate these hypotheses. Using an illustrative example based on commuter choices, we demonstrate how the energy-based approach can be integrated into intervention design to promote both effective and resilient behavioural change.

\section{Nudging Commuter Behaviour}\label{sec:network_setup} 

\subsection{Commuter Choice as a Stag Hunt Game}

Consider scenarios where individuals must choose between a socially responsible action and a more convenient, self-serving alternative—such as opting for cycling versus driving a car. The collective benefit is maximised when all individuals select the responsible option, as widespread adoption of cycling leads to improved environmental outcomes. However, in the absence of trust regarding others' cooperation, individuals may prefer the selfish alternative. In such situations, individuals who initially choose the responsible action may experience negative externalities; for example, cyclists may face traffic congestion and pollution generated by cars. Consequently, the  rational choice becomes defection — opting to drive car, especially when trust on widespread cooperation is low. This situation can be modelled using the Stag Hunt game~\cite{Stag_hunt_basics,StagHunt} (Table~\ref{tab:stag-hunt}).

\begin{table}[]
\centering
\caption{The payoff matrix for the stag-hunt game. Captures the essence of the decision between a responsible choice and a selfish choice.}
\begin{tabular}{|l|l|l|}
\hline
          & Cooperate                                                   & Defect                                                      \\ \hline
Cooperate & {\textbf{3,3}} & 0,2                                                         \\ \hline
Defect    & 2,0                                                         & {\textbf{2,2}} \\ \hline
\end{tabular}\label{tab:stag-hunt}
\end{table}

The stag hunt matrix has two pure Nash equilibria, 
\begin{enumerate}
    \item Universal Cooperation (Cooperate, Cooperate) - where everyone chooses to trust their neighbours and choose to cycle.
    \item Universal Defection (Defect, Defect) - where everyone chooses to drive car as they do not trust others to cooperate, and want to avoid sucker's payoffs.
\end{enumerate}

Let us assume policy makers want to nudge the population to be environment-friendly. They must design an intervention to push the population towards Universal Cooperation i.e when everyone choose to cycle. They can simulate the population as system of agents. As discussed before, conformity is a major motivator of agent behaviour when agents are placed in ambiguous situations. Thus agents are motivated to align to beliefs of their physical neighbours, as well as agents they are connected to via social networks. To incorporate this, every agent in the simulation is part of two networks -- one is based on geographical locations, and another is based on social interactions.

Each agent is assigned a geographical position and is identified by its \textit{latitude} and \textit{longitude}. As shown in Figure~\ref{fig: G1_Inital}, agents are arranged in neighbourhood clusters. Each agent plays the stag-hunt game (see Table ~\ref{tab:stag-hunt}) with all the agents within a certain \textit{influence radius} (hyper-parameter) of itself. Thus the physical network is formed by adding edges between all agents that lie within a fixed \textit{radius} of each other. Even within the radius an agent choosing to drive affects agents cycling nearby significantly more than those far away. To incorporate this  diminishing influence with increase in the physical distance we model the payoff as function of distance:
\begin{equation}\label{attenuation}
Payoff_i^{final}=Payoff_i\times\Gamma^{d}
\end{equation}
where  $\Gamma\in[0,1]$ is a hyperparamater to attenuate payoff and $d$ is the Euclidean distance between the two agents.
\begin{equation*}\label{L2_d}
    d=\sqrt{(latitude_i-latitude_j)^2+ (longitude_i-longitude_j)^2}
\end{equation*}

The social network is modelled as a Barabási-Albert network (Figure~\ref{fig: G1_Inital}), which follows a preferential attachment mechanism where new nodes are more likely to connect to existing nodes with higher degrees. This model was selected because many real-world networks—including social media platforms, the World Wide Web, and various communication and collaboration networks—exhibit scale-free properties that are well captured by the Barabási-Albert structure~\cite{Barabasi1999}. In such networks, a small number of highly connected hubs coexist with a large number of nodes with few connections, reflecting the heterogeneity commonly observed in social systems.

\subsection{Strategies and Belief Revision}
At the beginning of the simulation, each agent \textbf{i} is randomly assigned a strategy $s_i \in [0,1]$, indicating the probability with which the agent \textbf{i} chooses to cycle. The agents can now be profiled based on their strategies. 
\begin{table}[]
\centering
\caption{Agents are profiled on the basis of their strategy ranges. $s_i$ or strategy of agent i indicates the probability of it choosing cycle}
\begin{tabular}{|l|l|}
\hline
Profile            & $s_i$ (Probability of Choosing Cycle) \\ \hline
Highly Distrusting & 0  to 0.25                    \\ \hline
Distrusting        & 0.25 to 0.50                  \\ \hline
Trusting           & 0.50 to 0.75                  \\ \hline
Highly Trusting    & 0.75 to 1                     \\ \hline
\end{tabular}
\end{table}
\begin{figure*}
\centering
  \begin{minipage}{0.45\textwidth}
     \includegraphics[width=\textwidth, height=\textwidth]{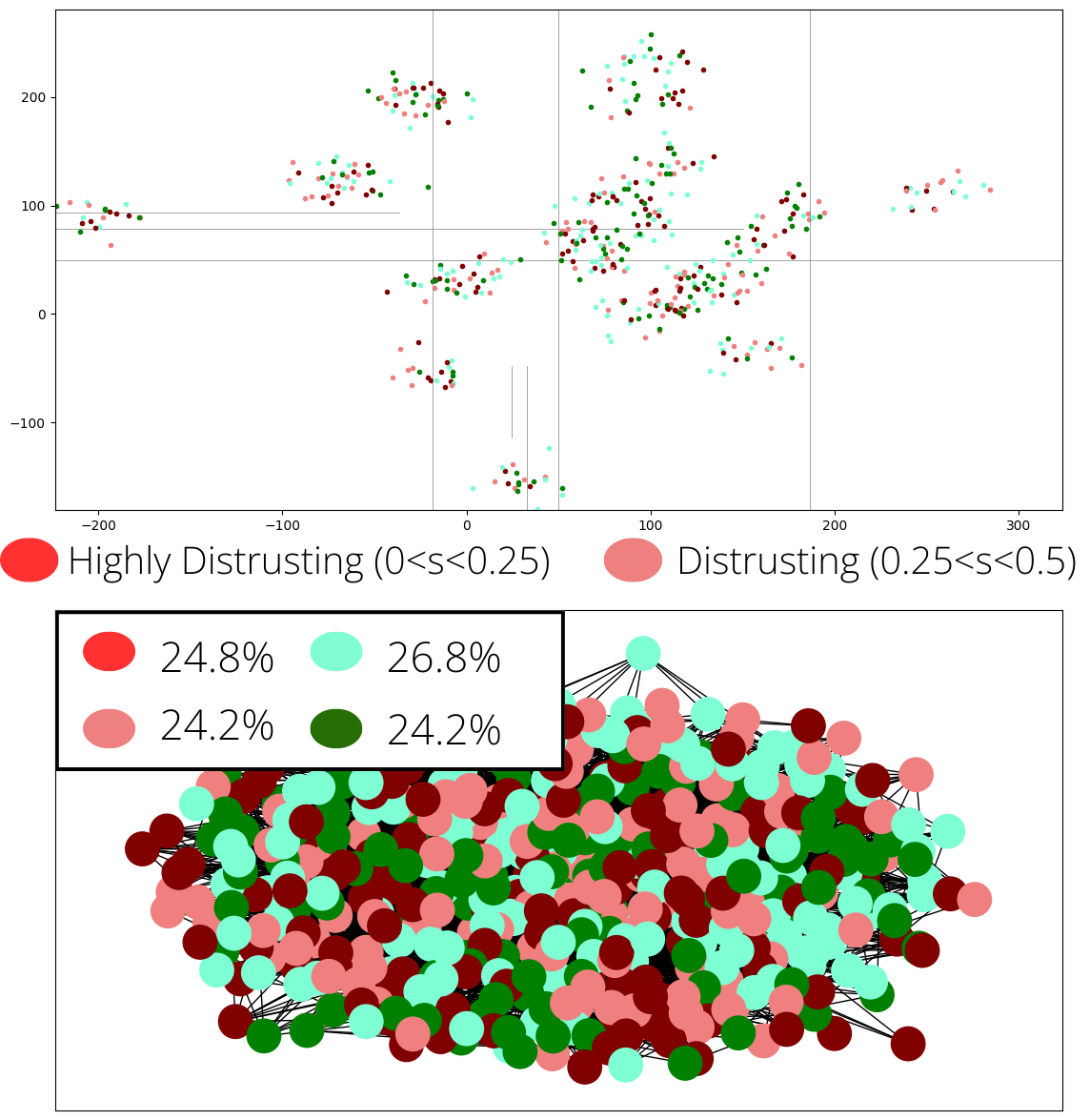}
  \end{minipage}
  \hfill
  \begin{minipage}{0.45\textwidth}\includegraphics[width=\textwidth,height=\textwidth]{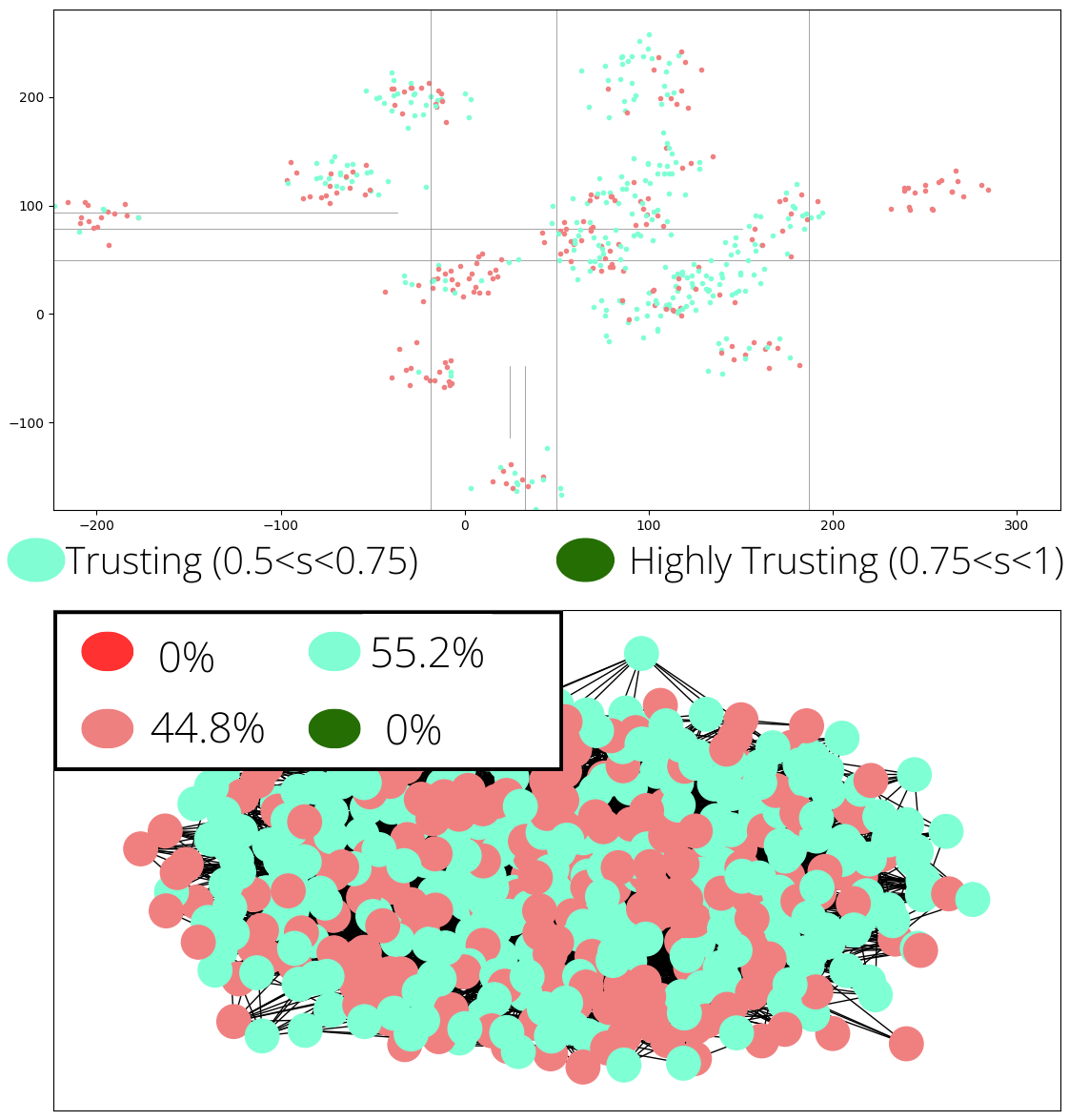}
  \end{minipage}
  \caption{\label{fig: G1_Inital} Conforming agents are motivated align to beliefs of their physical neighbours, as well as agents they are connected to via social networks. To capture this, every agent is part of two networks in the simulation. The graphs on top show the Physical Network, based on the position of the agent defined by a (longitude,latitude) coordinate. The bottom ones show the Social Network modelled as a Barabasi-Albert network. The agents are profiled based on their strategies (see Section~\ref{sec:network_setup}). Strategies are randomly assigned during initialization resulting in an almost equal distribution as seen on the left. The simulation converges to an initial state with the distribution shown on the right}
\end{figure*}

The game is played over several epochs; where each epoch consists of every agent playing the stag-hunt game with all other agents within the fixed \textit{radius}. After every epoch, agents undergo a belief revision either motivated by \textbf{conformity} or to \textbf{maximise utility}:
\subsubsection{Conformity Motivated Belief Revision}
In a conforming population, the dissonance between an agent's belief and that of its neighbours is captured as \textit{stress}. Stress experienced by agent \textbf{i} is:
\begin{multline}\label{stress}
stress_i = \sum_{j\in Physical\_neighbours(i)}L_2(s_j - s_i)+\sum_{k\in Social\_neighbours(i)}L_2(s_k - s_i)  
\end{multline}
The physical neighbourhood of an agent consists of all the agents whose physical location falls in the \textit{influence radius} of itself. The social neighbourhood of an agent consists of all agents who are directly connected to the agent in the social network layer.

The value of $s_i$ that minimizes the stress function above is the centroid of the polygon formed by the strategy values of all the neighbours of $i$ - physical and social\footnote{Refer to the Appendix for Proof: centroid minimizes equation~\ref{stress}.}. Thus the optimal strategy for an agent that is motivated by conformity and aims to experience the least stress from its neighbours is given by:
\begin{equation}\label{optimal}
    s^{opt}_i = \frac{\sum_{j\in Physical\_neighbours(i)}s_j +\sum_{k\in Social\_neighbours(i)}s_k}{|Physical\_neighbours|+|Social\_neighbours|}
\end{equation}
An agent motivated by conformity chooses to revise its beliefs from epoch \textit{t} to \textit{t+1}, following a gradient descent towards the optimal strategy with a learning rate $\Delta$ can be described as follows:
\begin{align*}
    s^{t+1}_i &= s^t_i + \Delta \times ( s^{opt}_i - s^{t}_i)\\
\end{align*}
\subsubsection{Utility Maximisation Motivated}
Agents could also \emph{rationally} choose to change their strategy to mimic the strategy of the best-performing agent within their physical neighbourhood, with a learning rate $\Delta$ can be described as: 

\begin{align*}
    k &= \arg\max_{j \in \text{Physical\_neighbours}(i)} \left( total\_Payoff_j \right) \\
    s^{t+1}_i &= s^t_i + \Delta \times (s^t_k - s^t_i)
\end{align*}

The hyper-parameter $\alpha \in [0,1]$ indicates the probability with which the belief revision is motivated by the agent's wish to conform with its neighbours. $ 1- \alpha$ is the probability of the belief revision being motivated by utility maximisation. While modelling a conforming population, we ensure $\alpha > 0.5$. This signifies that agents change their strategy mainly to align with the neighbours and behave purely rationally only a small fraction of the time.
\subsection{Metrics for Intervention Design}
Once the system is set up with $n$ agents, the simulation is run till convergence (or) for 1500 epochs, whichever happens first. If none of the agents revises their beliefs for 20 epochs, the simulation is considered to have converged. Post convergence state of the system can be evaluated based on the following metrics:
\subsubsection{Impact}
This is a metric used to evaluate an intervention. It compares the number of agents with cycle strategies ($s>0.5$) before and after the intervention and captures the increase.
\begin{equation}\label{impact}
impact\_score(intervention) = \frac{\sum_{i\in agents}s'_i>0.5 - \sum_{i\in agents}s_i>0.5}{\sum_{i\in agents}s_i>0.5}\\
\end{equation}
where $s$ is the agent strategy before the intervention and $s'$ is after.
\subsubsection{Stability}
The stability of a system state is determined by how close the agent strategies are to their optimal strategies.
\begin{equation}\label{stability}
    stability (state) = 1 - \frac{\sum_{i\in agents } L_2(s^{opt}_i-s_i)}{n}
\end{equation}
From Equation~\ref{optimal} we know that an agent experiences the least stress when its strategy is $s^{opt}_i$. At this point $L_2(s^{opt}_i-s_i)=0$. However, if an agent's strategy is very far from optimal strategy i.e  $L_2(s^{opt}_i-s_i)$ is high, then the agent becomes unstable. When the majority of the agents are not minimizing stress and are unstable,  then by equation~\ref{stability}, the state's stability decreases. The higher the stress on individual agents, the higher the energy and thus, the lower the stability of the network. Hence the \textit{\textbf{stability of a state acts as an indicator of its energy}}.
\subsubsection{Resilience}\label{sec:resilience}
The resilience of a system state is evaluated based on its ability to withstand certain targeted attacks and random perturbations~\cite{ResilienceDef,TAdef}. 
A random perturbation or a targeted attack may induce changes in the strategies of the agents. The smaller the magnitude of the change, the greater the resilience of the state. Thus the resilience score of a state to a particular attack/perturbation can be defined as:
\begin{equation}\label{resilience}
resilience\_score(state) = \frac{1}{e^{\delta(S,S')}}\\
\end{equation}
\begin{align*}
S &= \{s_1,s_2,...,s_{n}\}  \\
S' &= \{s'_1,s'_2,...,s'_{n}\}  \\
\delta(S,S') &= \sqrt{\sum_{i\in agents}(s_i-s'_i)^2}
\end{align*}
where $S$ is the set of agent strategies before the attack/perturbation and $S'$ is after. Since they can be considered as vectors, the Euclidean distance between $S$ and $S'$ is taken as a measure of the magnitude of the attack/target. 

\section{Experiments and Results}\label{sec:experiments_results}  
The system configurations in this section consist of 500 agents, and follow the setup discussed in Section~\ref{sec:network_setup}. On these simulations are run with the following hyperparameter values -- $
    \alpha = 0.7,  
    \gamma = 0.8,  
    \Delta = 0.01
$

By setting $\alpha=0.7$, the agents are motivated to change their strategies based on their desire to conform (70\% of the time) and to maximise utility (30\% of the time), forming a largely conforming population. This particular combination of hyperparameters was chosen as they caused the initial state to reach an almost equal share of car and cycle users. This gives a proper environment to introduce interventions to push the population towards cycle users and evaluate the post-intervention states on impact and resilience. However, the findings of this work hold for all populations where $\alpha>0.5$.

\subsection{Hypothesis 1: Lower the energy of a complex social system, greater the resilience}\label{sec:stabilityResilience}
Twenty-five random system configurations were generated to investigate the correlation between resilience and stability, and consequently, between resilience and energy of the system. A minimum sample size of 25 observations is recommended for reliable Pearson correlation calculations ~\cite{DavidSmallSamplesize}. Since we employ the coefficient of determination ($r^2$) in our analysis, which is derived from Pearson’s correlation coefficient, we adopted this recommendation to ensure the robustness of our correlation estimates.
\begin{table*}[]
    \centering
    \caption{Systems' resilience scores to a particular attack/perturbation plotted against their stability values. The R-squared values indicate the strength of the correlation between resilience and stability, and thus resilience and energy of the system.}
    \begin{tabular}{|c|c|}
        \hline
        Attack on hubs & Attack by connecting car-users\\
       \includegraphics[width=0.43\textwidth]{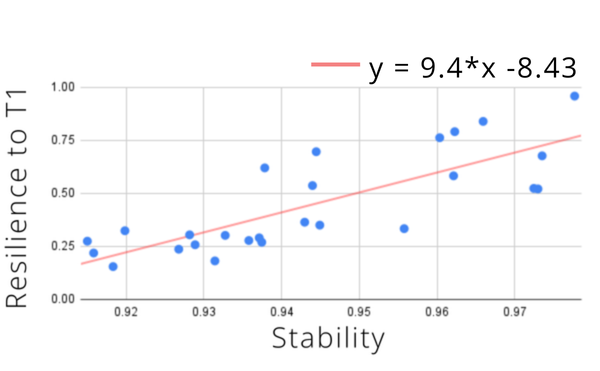}  &  
       \includegraphics[width=0.43\textwidth]{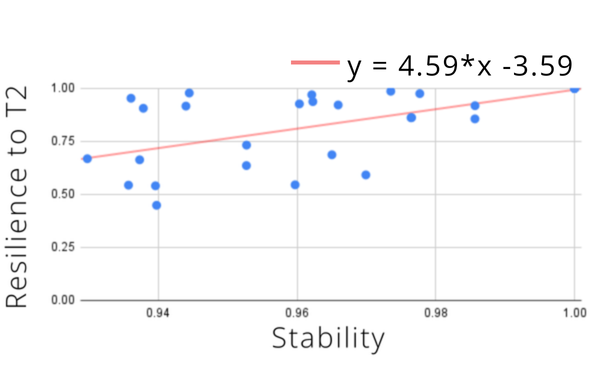} \\
       \footnotesize{R-squared: 0.633} & \footnotesize{R-squared: 0.453} \\
        \hline
       Random loss of social connections & Random agent drop-out\\
       \includegraphics[width=0.43\textwidth]{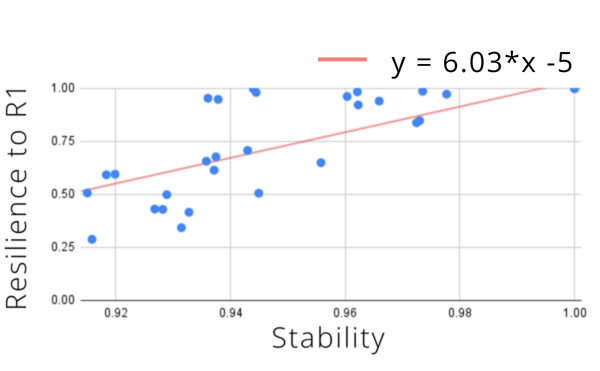} & 
       \includegraphics[width=0.43\textwidth]{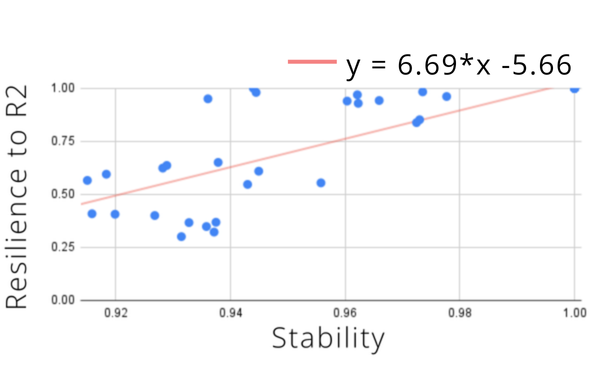}\\
       \footnotesize{R-squared: 0.645} & \footnotesize{R-squared: 0.65} \\  
        \hline
    \end{tabular}
    \label{fig:Resilience_stability}
\end{table*}
Agents in the system were initially assigned strategies using a uniform random function. As a result, the four strategy profile categories were evenly distributed at the start (see Figure~\ref{fig: G1_Inital}). Then the configurations were allowed to converge and the stabilities of these initial states were noted. Then, the initial states were subjected to the following targeted attacks and random perturbations\footnote{Refer to the Appendix for the detailed algorithms of the attacks and perturbation simulated in this work.}:
\begin{enumerate}
    \item \textit{Targeted Attack 1 - Forced Strategy Change of Hubs}\label{attack: T1}: In this attack, the strategy of agents which are identified as hubs of the social network is forcefully changed of cars (i.e. towards defecting). In a conforming population, this then encourages the other influenced agents to follow suit.
    \item \textit{Targeted Attack 2 - Car Users Come Together}\label{attack: T2}: In this attack, new social connections are proposed between car users(defectors) within a two-hop neighbourhood, which then get accepted with some probability. Since the agents are motivated to conform, the strategy of car users who accept the new social connections gets highly reinforced due to the addition of similar neighbours.
    \item \textit{Random Perturbation 1 - Broken Connections in the Social Network}\label{attack: R1}: In this perturbation, some random edges in the social network are removed. This translates to a loss of social connections between two humans, which is a common occurrence.
    \item \textit{Random Perturbation 2 - Agents Drop-out of the Social Network}\label{attack: R2}: In this perturbation, some random agents are removed from the social network i.e the in and out edges of the agent are removed from the social network. This translates to the end  of an individual's influence in a community, which is also quite common.
\end{enumerate}
 The resilience score to each attack and perturbation was calculated for all the 25 systems. Figure~\ref{fig:Resilience_stability} shows the resulting resilience vs stability graphs. The best-fit line for each scatter plot was obtained using the least square method. The best-fit lines in all four cases have positive slopes . A positive slope implies a correlation; hence, we can surmise that the greater the stability of a state of the system, greater its resilience to most attacks and perturbations. The $r^2$ values of the best-fit lines for resilience score for the targeted attack on hubs and both the random attacks are nearly 0.65. While this does not indicate a perfect linear relationship, it suggests that approximately 65\% of the variance in resilience can be explained by the system’s energy, supporting the hypothesis that \emph{higher the stability and lower the energy of a system, the higher its resilience}.

\subsection{Hypothesis 2: Embedding Energy Constraints into Intervention Design builds Resilience}
\begin{table}[]
\caption{\label{tab:g2_Intervention_compare} Comparing the vanilla CPI to its variant which prioritizes stability. Here a Targeted Attack against hubs was simulated (T1)}
\begin{tabular}{|l|l|l|l|l|l|}
\hline
 &
  \begin{tabular}[c]{@{}l@{}}Cyclists Before\\ Intervention\end{tabular} &
  \begin{tabular}[c]{@{}l@{}}Cyclists After\\ Intervention\end{tabular} &
  \begin{tabular}[c]{@{}l@{}}Impact\\ Score\end{tabular} &
  \begin{tabular}[c]{@{}l@{}}Cyclists\\ After Attack\end{tabular} &
  \begin{tabular}[c]{@{}l@{}}Resilience\\ Score\end{tabular} \\ \hline
\begin{tabular}[c]{@{}l@{}}Intervention 1\\ (Vanilla CPI)\end{tabular} &
  55.2\% &
  72\% &
  0.3 &
  12.2\% &
  0.42 \\ \hline
\begin{tabular}[c]{@{}l@{}}Intervention 2\\ (Stable CPI)\end{tabular} &
  55.2\% &
  68.2\% &
  0.23 &
  \textbf{55.4}\% &
  \textbf{0.82} \\ \hline
\end{tabular}
\end{table}

A system configuration is generated and initial strategies were assigned to the 500 agents based on a uniformly random function once again, resulting in equal distribution among the four strategy profiles. Then it was run till convergence to get to an initial state which had 44.8\%  agents inclined towards driving car and 55.2\% agents inclined towards cycling. Since there are still a significant number of car agents, the state is sub-optimal. To increase the number of cycle users, we performed an intervention.
\begin{table*}
\centering
\caption{Comparing resilience of the system to Random Perturbations 1 and 2 post Vanilla CPI (I1) and Stable CPI(I2)}
\begin{tabular}{|l|rr|rr|}
\hline
\begin{tabular}[c]{@{}l@{}}
Simulation\\  No.\end{tabular}& \multicolumn{2}{l|}{Resilience to R1}                      & \multicolumn{2}{l|}{Resilience to R2}                      \\ \hline
                                                               & \multicolumn{1}{l|}{I1}    & \multicolumn{1}{l|}{I2}       & \multicolumn{1}{l|}{I1}    & \multicolumn{1}{l|}{I2}       \\ \hline
1                                                         & \multicolumn{1}{r|}{0.59}  & 0.60  & \multicolumn{1}{r|}{0.54}  & 0.58  \\ \hline
2                                                         & \multicolumn{1}{r|}{0.75}  & 0.88  & \multicolumn{1}{r|}{0.91} & 0.92 \\ \hline
3                                                         & \multicolumn{1}{r|}{0.60}  & 0.60                          & \multicolumn{1}{r|}{0.64}  & 0.67  \\ \hline
4                                                          & \multicolumn{1}{r|}{0.86} & 0.86 & \multicolumn{1}{r|}{0.86}  & 0.86 \\ \hline
5                                                       & \multicolumn{1}{r|}{0.73}  & 0.74 & \multicolumn{1}{r|}{0.73} &0.74 \\ \hline
Average Score                                             & \multicolumn{1}{r|}{0.73}  & 0.77  & \multicolumn{1}{r|}{0.78}  & 0.79  \\ \hline
\end{tabular}

\label{tab:resilience_Simulations}
\end{table*}
\subsubsection{Intervention 1: Vanilla CPI}
\footnote{Refer to the Appendix for details on Intervention Implementations.}
In a largely conforming population, one common way to reinforce people's beliefs and prevent unwanted strategy changes is to connect them to others who uphold the same beliefs . This intervention is known as the Connect People Intervention(CPI)~\cite{ConnectPeopleIntervention}. Even in social media networks, there is a feature that recommends people from our second or third circles based on similarity of interests and beliefs. The CPI is simulated by proposing connections between cycle users who are within two hops of each other. Assuming that agents accept the suggested connections with an \textbf{acceptance\_probability} of 0.15 ; 15\% of the suggested social connections are added to the network. The simulation is run again till convergence. The impact of the vanilla CPI  is measured. Targeted Attack 1 (see Section ~\ref{sec:stabilityResilience}) is performed by forcefully setting the strategies of the top 30 agents, ordered by \textit{hub score}, to \textbf{0.3} i.e. 70\% inclined to cars. The resilience of the system to this Targeted Attack is measured. 
\subsubsection{Intervention 2: Stable CPI}
Simulation of this Intervention also involves recommending cycle users to other cycle users within their two-hop neighbourhood. However, similarity is not the only criterion. The intervention design is tweaked and a new constraint is added. Connections are only proposed if the addition of the edge will reduce the stress of the agents and thereby, improve the overall stability of the network. Again, we assume that agents accept the suggested connections with \textbf{acceptance\_probability} of 0.15. The simulation is run  till convergence and the impact of the stable version of CPI and resilience of the post-intervention to Targeted Attack on hubs is measured.
\subsubsection{Comparing Interventions 1 and 2}
Intervention 2, whose design was tweaked to maintain stability, fared better on resilience than Intervention 1, while creating a comparable impact (Table ~\ref{tab:g2_Intervention_compare}). This can also be observed clearly in Table~\ref{tab:resilience_Simulations} which compares the resilience of the post-intervention states to perturbations, after being subject to vanilla CPI and stable CPI. Stable CPI fared better on the resilience metric. This is in line with the findings from the previous section that stability leads to resilience. Thus, embedding energy minimization into intervention design builds resilience.

\section{Discussion}

The results of investigating hypothesis 1 showed that while the relationship between resilience and stability may not be strictly linear, however it does provide evidence that higher resilience is generally observed in social systems with lower energy, consistent with the fundamental energy postulate. Consider a scenario where an evaluator has to pick the best intervention to implement from among  $m$ options with comparable impact. Ideally, it would require evaluation of \textbf{resilience} of the $m$ interventions. However, guided by the principle that low-energy states are more resilient, the evaluator can simply select the intervention that yields the lowest-energy state. This saves both time and computational effort otherwise required to simulate and analyze a wide range of perturbations on each of the $m$ interventions.

The investigations for hypothesis 2 demonstrated that energy-aware behavioural intervention design contributes to building resilience. An intervention designer must simultaneously optimize two objectives: (i) achieving the desired behavioural change, and (ii) ensuring resilience of the resulting state. In this study, the behavioural objective was to maximise the number of agents choosing to cycle. Resilience was promoted by incorporating an energy-based constraint: agents are not allowed to form connections if doing so increases their dissonance and thereby their energy. By integrating this regularization mechanism into the intervention design, the Stable CPI ensured that the system remains close to a low-energy configuration while achieving comparable impact to the vanilla version. Thus it created \textbf{sustainable behavioural change}.

Intuitively, the system’s energy reflects how well agents' behaviours align with their fundamental motivations. Incorporating stability-based regularization allows the intervention to respect these underlying motivations, avoiding scenarios where agents are pushed into behaviours that strongly contradict their natural preferred state. The larger the number of agents experiencing such misalignment from their desired state, the more susceptible the system becomes to perturbations, increasing the likelihood of collapsing to an undesired state. Thus the impact it creates is not sustainable.  
\begin{figure*}
\centering
  \begin{minipage}{0.6\textwidth}
     \includegraphics[width=\textwidth, height=0.6\textwidth]{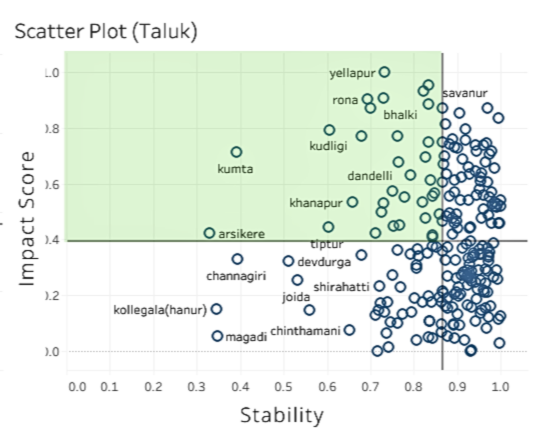}
  \end{minipage}
  \hfill
  \begin{minipage}{0.35\textwidth}\includegraphics[width=\textwidth,height=\textwidth]{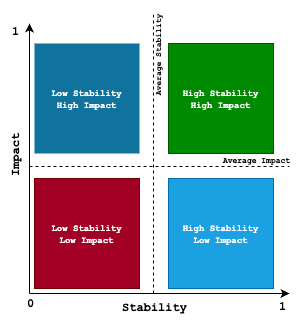}
  \end{minipage}
  \caption{\label{fig:Conclusion_image2}Impact vs Stability graph for Infant Mortality Rate. Sub-districts in the top right quadrant are the best candidates to implement the intervention}
\end{figure*}

The commuter's choice was a simple set-up used in the paper to investigate the hypotheses in a quantitative way within a simulation, with many inherent assumptions. The energy-based principle to building resilience can be incorporated into more intricate intervention designs for complex systems. Similar principles have been previously used in policy-making and intervention design for domains like healthcare, agriculture, climate change etc ~\cite{2023IJCAI}. For instance, energy-based intervention evaluation have been presented in the Karnataka Data Lake Website\footnote{\url{https://avalokana.karnataka.gov.in/DataLake/DataLake}}. Figure~\ref{fig:Conclusion_image2} represents an intervention aimed at improving one of the UN Sustainable Development Goal (UN-SDG)- Infant Mortality Rate (IMR). The intervention is simulated for all the sub-districts of the Indian state of Karnataka. The sub-districts where both impact and post-intervention stability are high are the best places to intervene as they will result in sustainable outcomes. Thus sub-districts in the top right quadrant are recommended to the policy makers as the best candidates to implement the intervention.

\section {Conclusion and Future Work}
This work offers a perspective towards building sustainability into intervention design using a energy lens.  It is a cross-disciplinary view integrating physics, ecology, sociology, and systems thinking. It offers the holistic view for policy-design showcasing resilience can be built into interventions designed for different contexts. This is achieved by only selecting the interventions that don't increase the energy of the social systems on which they are applied. 

While this work can be expanded to accurately design interventions for complex use-cases, it also gives a simple rule of thumb to evaluate any behavioural nudge. If it forces a very sudden, very large misalignment between agents behaviour and their motivations - fear, social rejection, pain - while trying to achieve its aim, the final state achieved will not be sustainable. 

While we chose the context of nudging the population towards cycles as an example, it would be interesting to explore how popular interventions aimed only at behavioural change, can be enhanced to achieve resilience by tweaking their design to be energy aware, thereby creating sustainable impact.

%
%
%
\bibliographystyle{splncs04}
\bibliography{references}

\begin{thebibliography}{10}
\providecommand{\url}[1]{\texttt{#1}}
\providecommand{\urlprefix}{URL }
\providecommand{\doi}[1]{https://doi.org/#1}

\bibitem{ResiliencePostTrauma}
The resilience activation framework: a conceptual model of how access to social resources promotes adaptation and rapid recovery in post-disaster settings. Journal of Behavioral Health Services and Research  \textbf{42}(1),  42--57 (Jan 2014). \doi{10.1007/s11414-014-9410-2}, publisher Copyright: {\textcopyright} 2014, National Council for Behavioral Health.

\bibitem{adger2000social}
Adger, W.N.: Social and ecological resilience: are they related? Progress in human geography  \textbf{24}(3),  347--364 (2000)

\bibitem{AgileResilience}
Akram, M., Islam, N., Chauhan, C., Yaqub, M.: Resilience and agility in sustainable supply chains: A relational and dynamic capabilities view. Journal of Business Research  \textbf{183},  114855 (10 2024). \doi{10.1016/j.jbusres.2024.114855}

\bibitem{ResinSus1}
Anderies, J., Folke, C., Walker, B., Ostrom, E.: Aligning key concepts for global change policy: Robustness, resilience, and sustainability. Ecology and Society  \textbf{18} (06 2013), 10.5751/ES-05178-180208

\bibitem{solomon1956conformity}
Asch, S.: Studies of independence and conformity: I. a minority of one against a unanimous majority. Psychological Monographs: General and Applied  \textbf{70} (01 1956). \doi{10.1037/h0093718}

\bibitem{StagHunt}
Avramopoulos, I.: Jump-starting coordination in a stag hunt: Motivation, mechanisms, and their analysis. CoRR  \textbf{abs/1601.03162} (2016), \url{http://arxiv.org/abs/1601.03162}

\bibitem{Barabasi1999}
Barabási, A.L., Albert, R.: Emergence of scaling in random networks. Science  \textbf{286}(5439),  509--512 (1999). \doi{10.1126/science.286.5439.509}, \url{https://www.science.org/doi/abs/10.1126/science.286.5439.509}

\bibitem{Conformity_Stress}
Berry, A.: Pressure of conformity and its effect on cognitive dissonance. Journal of Obesity \& Weight Loss Therapy  \textbf{06} (01 2016), 10.4172/2165-7904.C1.037

\bibitem{ResilienceInPolicy1}
Biggs, R., Schlüter, M., Schoon, M.: Principles for Building Resilience: Sustaining Ecosystem Services in Social-Ecological Systems (05 2015). \doi{10.1017/CBO9781316014240}

\bibitem{bondarchik2016improving}
Bondarchik, J., Jab{\l}o{\'n}ska-Sabuka, M., Linnanen, L., Kauranne, T.: Improving the objectivity of sustainability indices by a novel approach for combining contrasting effects: Happy planet index revisited. Ecological Indicators  \textbf{69},  400--406 (2016)

\bibitem{ResilienceTelecommunication}
Brinkmeier, M., Fischer, M., Grau, S., Schaefer, G., Strufe, T.: Methods for improving resilience in communication networks and p2p overlays. Praxis der Informationsverarbeitung und Kommunikation  \textbf{32},  64--78 (05 2009). \doi{10.1515/piko.2009.0013}

\bibitem{colocousis2017long}
Colocousis, C.R., Rebellon, C.J., Smith, N., Sobolowski, S.: How long can we keep doing this? sustainability as a strictly temporal concept. Journal of Environmental Studies and Sciences  \textbf{7},  274--287 (2017)

\bibitem{BehChange_tryProcedures}
Cutler, D.: Behavioral health interventions: What works and why. In: Critical Perspectives on Racial and Ethnic Differences in Health in Late Life, chap.~17. National Academies Press (US) (01 2004), \url{https://www.ncbi.nlm.nih.gov/books/NBK25527/}

\bibitem{DavidSmallSamplesize}
David, F.N., of~London. Biometric~Laboratory., U.: Tables of the ordinates and probability integral of the distribution of the correlation in small samples. Issued by the Biometrika Office, University College, and printed at the University Press, Cambridge, Eng., London (1938)

\bibitem{SocialEnergy2}
Easley, D., Kleinberg, J.: Networks, Crowds, and Markets: Reasoning About A Highly Connected World (07 2010). \doi{10.1017/CBO9780511761942}

\bibitem{ResilienceInPolicy2Framework}
Feindt, P.H., Proestou, M., and, K.D.: Resilience and policy design in the emerging bioeconomy – the rpd framework and the changing role of energy crop systems in germany. Journal of Environmental Policy \& Planning  \textbf{22}(5),  636--652 (2020). \doi{10.1080/1523908X.2020.1814130}, \url{https://doi.org/10.1080/1523908X.2020.1814130}

\bibitem{FoggBehaviourModel}
Fogg, B.: A behavior model for persuasive design. In: Proceedings of the 4th International Conference on Persuasive Technology. Persuasive '09, Association for Computing Machinery, New York, NY, USA (2009). \doi{10.1145/1541948.1541999}, \url{https://doi.org/10.1145/1541948.1541999}

\bibitem{ResilienceDef}
Folke, C.: Resilience (republished). Ecology and Society  \textbf{21} (12 2016), 10.5751/ES-09088-210444

\bibitem{SocialEnergy}
Haken, H.: Synergetics: Introduction and Advanced Topics (01 2004). \doi{10.1007/978-3-662-10184-1}

\bibitem{CognitiveDissconance1}
Harmon-Jones, E.: Cognitive Dissonance Theory, pp. 543--549 (01 2012). \doi{10.1016/B978-0-12-375000-6.00097-5}

\bibitem{TAdef}
on~Increasing National Resilience~to Hazards, C., Disasters, N.A.: Disaster Resilience: A National Imperative. The National Academies Press, Washington, DC (2012). \doi{10.17226/13457}

\bibitem{InterventionMAS}
Hu, J., Liu, Z.X., WANG, J.H., Wang, L., Hu, X.: Estimation, intervention and interaction of multi-agent systems. Acta Automatica Sinica  \textbf{39},  1796–1804 (11 2013), 10.3724/SP.J.1004.2013.01796

\bibitem{NetworkResilience2}
Jackson, M.O.: Social and Economic Networks. Princeton University Press (2010)

\bibitem{LeanSigSigmaResilience}
Kumarasamy, R., Sankaranarayanan, B., Ali, S.M., Priyanka, R.: Improving organizational performance: leveraging the synergy between industry 4.0 and lean six sigma to build resilient manufacturing operations. OPSEARCH  (01 2025). \doi{10.1007/s12597-025-00904-2}

\bibitem{simulated_annealing}
van Laarhoven, P., Ohki, Y., Teferra, D.: Simulated Annealing: Theory and Applications. Kluwer Academic Publishers, Dordrecht, Boston (01 1987)

\bibitem{CognitiveDissconance2}
Miller, M., Clark, J., Jehle, A.: Cognitive Dissonance Theory (Festinger) (10 2015). \doi{10.1002/9781405165518.wbeosc058.pub2}

\bibitem{neumayer2001human}
Neumayer, E.: The human development index and sustainability—a constructive proposal. Ecological Economics  \textbf{39}(1),  101--114 (2001)

\bibitem{BehChange_EnergyUse}
Nielsen, K., van~der Linden, S., Stern, P.: How behavioral interventions can reduce the climate impact of energy use. Joule  \textbf{4} (07 2020), 10.1016/j.joule.2020.07.008

\bibitem{2023IJCAI}
Rachuri, S.N., Malavalli, A., Parasa, N.S., Bassin, P., Srinivasa, S.: Modeling the impact of policy interventions for sustainable development. In: Proceedings of the Thirty-Second International Joint Conference on Artificial Intelligence, {IJCAI-23}. pp. 7167--7170 (8 2023). \doi{10.24963/ijcai.2023/841}, demo Track

\bibitem{rose2014economic}
Rose, A.: Economic resilience and its contribution to the sustainability of cities. Resilience and sustainability in relation to natural disasters: A challenge for future cities pp. 1--11 (2014)

\bibitem{InterventionMAS2}
Schreinemachers, P., Berger, T.: Multi-Agent Systems for the Simulation of Land Use Change and Policy Interventions (06 2008), 10.5772/5989

\bibitem{Stag_hunt_basics}
Skyrms, B.: The stag hunt. Proceedings and Addresses of the American Philosophical Association  \textbf{75} (11 2001), 10.2307/3218711

\bibitem{TheoryofBeing}
Srinivasa, S.: The Theory of Being: Systems Science from a Traditional Indian Perspective. Independently Published (2019), \url{https://books.google.co.in/books?id=wAywxwEACAAJ}

\bibitem{walker2004resilience}
Walker, B., Holling, C.S., Carpenter, S.R., Kinzig, A.: Resilience, adaptability and transformability in social--ecological systems. Ecology and society  \textbf{9}(2) (2004)

\bibitem{NetworkResilience}
Watts, D.J.: A simple model of global cascades on random networks. Proceedings of the National Academy of Sciences  \textbf{99}(9),  5766--5771 (2002). \doi{10.1073/pnas.082090499}

\bibitem{ConnectPeopleIntervention}
Webber, M., Reidy, H., Ansari, D., Stevens, M., Morris, D.: Developing and modeling complex social interventions:introducing the connecting people intervention. Research on Social Work Practice  (01 2016)

\end{thebibliography}
\begin{center}
\clearpage
\appendix
\Large\bfseries Appendix
\end{center}
\section{Targeted Attacks}\label{sec:TA}
\begin{algorithm}[H]
\caption{ Forced Strategy Change of Hubs}
\begin{algorithmic}
\Procedure{AttackHubs}{$attack\_magnitude$,$new\_startegy$}
    \State $Hubs \gets Sorted(agents, key \gets hub\_score)$
    \For{$i\gets 0, attack\_magnitude$}
        \State $s_{Hubs[i]} \gets new\_startegy$
    \EndFor
\EndProcedure
\end{algorithmic}
\end{algorithm}
\begin{algorithm}[H]
\caption{Car Users Come Together}
\begin{algorithmic}
\Procedure{ConnectDefectors}{$acceptance\_probability$}
    \ForAll{$ i \in |agents|$}
        \ForAll {$j \in social\_neighbourhood(i)$}
            \ForAll {$k \in social\_neighbourhood(j)$}
                \If{$s_i<0.5$ and $s_k<0.5$}
                    \If{$random()<acceptance\_probability$}
                        \State append $Edge(i,k)$ to $Edges(social\_network)$
                    \EndIf
                \EndIf
            \EndFor
        \EndFor
    \EndFor
\EndProcedure
\end{algorithmic}
\end{algorithm}
\clearpage
\section{Random Perturbations}\label{sec:RP}
\begin{algorithm}[H]
\caption{Broken Connections in the Social Network}
\begin{algorithmic}
\Procedure{ConnectionBreaks}{$perturbation\_magnitude$}
    \ForAll {$edge \in Edges(social\_network)$}
        \If{$random()<perturbation\_magnitude$}
            \State remove $edge$ from $Edges(social\_network)$
        \EndIf
    \EndFor
\EndProcedure
\end{algorithmic}
\end{algorithm}
\begin{algorithm}[H]
\caption{Agents Drop-out of the Social Network}
\begin{algorithmic}
\Procedure{AgentsDropOut}{$perturbation\_magnitude$}
    \ForAll{$ i \in |agents|$}
        \If{$random()<perturbation\_magnitude$}
            \State $\emph{E} \gets \{e \in Edges(social\_network)$ with i=source or i=destination $\}$
            \State remove $\emph{E}$ from $Edges(social\_network)$
        \EndIf  
    \EndFor
\EndProcedure
\end{algorithmic}
\end{algorithm}
\section{Interventions}\label{sec:Interventions}
\begin{algorithm}[H]
\caption{Vanilla CPI}
\begin{algorithmic}
\Procedure{Intervention1}{$acceptance\_probability$}
    \ForAll{$ i \in |agents|$}
        \ForAll {$j \in social\_neighbourhood(i)$}
            \ForAll {$k \in social\_neighbourhood(j)$}
                \If{$s_i>0.5$ and $s_j>0.5$ and $s_k>0.5$}
                    \If{$random()<acceptance\_probability$}
                        \State append $Edge(i,k)$ to $Edges(social\_network)$
                    \EndIf
                \EndIf
            \EndFor
        \EndFor
    \EndFor
\EndProcedure
\end{algorithmic}
\end{algorithm}
\begin{algorithm}[H]
\caption{Stable CPI}
\begin{algorithmic}
\Procedure{Intervention2}{$acceptance\_probability$}
    \ForAll{$ i \in |agents|$}
        \ForAll {$j \in social\_neighbourhood(i)$}
            \ForAll {$k \in social\_neighbourhood(j)$}
                \If{$s_i , s_j, s_k >0.5$ and $Edge(i,k) \notin Edges(social\_network)$}
                    \State$\Delta_{stability}\gets$
                    \begin{align*}
                        (L_2(s_i,s_i^{opt})+L_2(s_k,s_k^{opt}))-\\(L_2(s_i,s_i^{opt}(including Edge(i,k)))+L_2(s_k,s_k^{opt}(including Edge(i,k))))
                    \end{align*}
                    \If{$random()<acceptance\_probability$ and $\Delta_{stability}>0$}
                        \State append $Edge(i,k)$ to $Edges(social\_network)$
                    \EndIf
                \EndIf
            \EndFor
        \EndFor
    \EndFor
\EndProcedure
\end{algorithmic}
\end{algorithm}
\section{Theorems}\label{sec:proofs}
\begin{theorem}
    Centroid Minimizes the Sum of Squared Euclidean Distances
\end{theorem}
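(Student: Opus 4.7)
The plan is to prove that for a finite collection of points $p_1, \ldots, p_n$ in a Euclidean space (in the paper's setting, the scalar strategies $s_j$ indexed over the combined physical and social neighbourhood of agent $i$), the function $f(c) = \sum_{j=1}^n \| p_j - c \|^2$ is minimized uniquely at the centroid $\bar{p} = \frac{1}{n}\sum_{j=1}^n p_j$. I would favour the classical ``add-and-subtract $\bar{p}$'' decomposition over raw calculus, because it yields the minimizer and the minimum value simultaneously and makes the role of the centroid transparent.

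First I would write $p_j - c = (p_j - \bar{p}) + (\bar{p} - c)$ and expand the squared norm to obtain
$$\| p_j - c \|^2 = \| p_j - \bar{p} \|^2 + 2 \langle p_j - \bar{p},\, \bar{p} - c \rangle + \| \bar{p} - c \|^2.$$
Summing over $j$, the cross term becomes $2 \langle \sum_{j=1}^n (p_j - \bar{p}),\, \bar{p} - c \rangle$, which vanishes identically because $\sum_{j=1}^n p_j = n\bar{p}$ forces $\sum_{j=1}^n (p_j - \bar{p}) = 0$. What remains is
$$f(c) = \sum_{j=1}^n \| p_j - \bar{p} \|^2 + n\,\| \bar{p} - c \|^2.$$
The first sum is a constant independent of $c$, and the second is non-negative and vanishes if and only if $c = \bar{p}$. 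Hence $\bar{p}$ is the unique global minimizer, which is exactly the claim used to justify equation~\ref{optimal} in the main text.

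As a sanity check I would cross-validate via calculus: $\nabla_c f(c) = -2 \sum_{j=1}^n (p_j - c) = 2nc - 2\sum_j p_j$, whose unique zero is $c = \bar{p}$, and the Hessian $2nI$ is positive definite, confirming a strict global minimum. I do not expect a genuine obstacle here, since this is a textbook identity; the only point that requires care is the mild notational abuse in the paper, where $L_2(\cdot)$ in the stress expression must be read as the squared Euclidean distance (not the Euclidean norm itself) for the averaging formula to follow. Flagging this interpretation at the start of the proof keeps the statement consistent with the definition of $s^{opt}_i$ used throughout Section~\ref{sec:network_setup}.
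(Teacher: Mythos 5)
Your proof is correct, but it takes a genuinely different route from the paper's. The paper argues by calculus: it writes $f = \sum_i (x-x_i)^2 + (y-y_i)^2$, sets $\partial f/\partial x = \sum_i 2(x - x_i) = 0$, and solves to get $x_{\min} = \frac{1}{n}\sum_i x_i$ (and similarly for $y$), without explicitly checking second-order conditions. Your ``add-and-subtract $\bar{p}$'' decomposition instead establishes the identity $f(c) = \sum_j \|p_j - \bar{p}\|^2 + n\|\bar{p} - c\|^2$, which buys you three things the paper's argument leaves implicit: the critical point is certified as a \emph{global} minimum rather than merely a stationary point, uniqueness of the minimizer is immediate, and you get the minimum value for free. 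It also works verbatim in any inner-product space, whereas the paper works coordinate-wise in the plane. Your closing remark about the notation is well taken and worth keeping: the centroid formula in equation~\ref{optimal} only follows if $L_2(\cdot)$ in the stress definition denotes the \emph{squared} Euclidean distance; if it were the unsquared distance, the minimizer of the one-dimensional sum would be the median of the neighbours' strategies, not their mean. Flagging that interpretation is the one substantive point of care in an otherwise textbook argument.
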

\begin{proof}
Let $(x_n, y_n)$ be $n$ points. We aim to minimize:
\begin{align*}
    f &= \sum_{i=1}^n (x - x_i)^2 + (y - y_i)^2 \\
    \frac{\partial f}{\partial x} &= \sum_{i=1}^n \frac{\partial}{\partial x} \left[ (x - x_i)^2 + (y - y_i)^2 \right] \\
    &= \sum_{i=1}^n 2(x - x_i)
\end{align*}
Setting the derivative to zero:
\begin{align*}
    \frac{\partial f}{\partial x} = 0 &\Rightarrow \sum_{i=1}^n 2(x_{\min} - x_i) = 0 \\
    &\Rightarrow \sum_{i=1}^n (x_{\min} - x_i) = 0 \\
    &\Rightarrow n \cdot x_{\min} = \sum_{i=1}^n x_i \\
    &\Rightarrow x_{\min} = \frac{1}{n} \sum_{i=1}^n x_i
\end{align*}
Similarly,
\[
    y_{\min} = \frac{1}{n} \sum_{i=1}^n y_i
\]
\end{proof}
\end{document}